\documentclass[
  twocolumn,
  aps,
  pra,
  floatfix,
  amsmath,amssymb,
  longbibliography
]{revtex4-2}

\usepackage[T1]{fontenc}
\usepackage[utf8]{inputenc}
\usepackage{amsthm}
\usepackage{graphicx}
\usepackage{booktabs}
\usepackage[dvipsnames]{xcolor}
\usepackage{array}
\usepackage{tikz}
\usetikzlibrary{calc,positioning}
\usepackage[
  breaklinks=true,
  colorlinks=true,
  citecolor=blue,
  linkcolor=blue,
  urlcolor=blue
]{hyperref}
\usepackage{orcidlink}
\hypersetup{
  pdftitle={Pseudocontexts forced by finite context hypergraphs},
  pdfauthor={Mirko Navara and Karl Svozil},
  pdfkeywords={pseudocontext, context hypergraph, incidence certificate,
    probability, faithful orthogonal representation}
}

\newtheorem{theorem}{Theorem}

\newtheorem{proposition}[theorem]{Proposition}
\newtheorem{corollary}[theorem]{Corollary}

\theoremstyle{definition}
\newtheorem{definition}[theorem]{Definition}

\theoremstyle{remark}

\newcommand{\R}{\mathbb R}
\newcommand{\C}{\mathbb C}
\newcommand{\E}{\mathcal E}
\newcommand{\one}{\mathbf 1}
\newcommand{\Pc}{\mathbb P}

\definecolor{ctx1}{RGB}{230,25,75}
\definecolor{ctx2}{RGB}{60,180,75}
\definecolor{ctx3}{RGB}{0,130,200}
\definecolor{ctx4}{RGB}{245,130,48}
\definecolor{ctx5}{RGB}{145,30,180}
\definecolor{ctx6}{RGB}{70,240,240}
\definecolor{ctx7}{RGB}{240,50,230}
\definecolor{ctx8}{RGB}{210,245,60}
\definecolor{ctx9}{RGB}{250,190,190}
\definecolor{ctx10}{RGB}{0,128,128}
\definecolor{ctx11}{RGB}{128,128,0}
\definecolor{ctx12}{RGB}{170,110,40}
\tikzset{ctxline/.style={line width=0.95pt,line cap=round}}

\begin{document}

\title{Pseudocontexts forced by finite context hypergraphs}

\author{Mirko Navara\,\orcidlink{0000-0002-0880-5992}}
\affiliation{Faculty of Electrical Engineering, Czech Technical University in Prague,
Technick\'{a} 2, CZ-166~27 Prague 6, Czech Republic}
\email{navara@fel.cvut.cz}

\author{Karl Svozil\,\orcidlink{0000-0001-6554-2802}}
\affiliation{Institute for Theoretical Physics, TU Wien,
Wiedner Hauptstrasse 8-10/136, A-1040 Vienna, Austria}
\email{karl.svozil@tuwien.ac.at}

\date{\today}

\begin{abstract}
A context is a complete set of mutually exclusive outcomes whose probabilities sum to one. We define a pseudocontext as two disjoint groups, with no mutually exclusive pair within either group, whose total probabilities must nevertheless be equal solely because of how the contexts overlap. We give an exact finite test for this property and show that the equality is independent of the chosen coordinates and probability model. Applying the test in three dimensions, we obtain a 15-outcome real example with groups of three and a 20-outcome complex example with groups of two. Under this definition, the sharp minimum number of outcomes in each target group is two over the complex field and three over the real field.
\end{abstract}

\maketitle

\section{Introduction}

A context is a complete collection of mutually exclusive propositions.  If
$p$ is a probability assignment and $E$ is a context, completeness and finite
additivity give
\begin{equation}
  \sum_{v\in E}p(v)=1.
  \label{eq:context-normalization}
\end{equation}
This elementary equation is shared by classical sample-space models,
quantum probabilities on orthogonal bases, and any other normalized model of
the same propositions.  It is therefore natural to ask whether finitely many
equations of the form~\eqref{eq:context-normalization} can force an additive
identity between two families that are not themselves contexts.

Such families were called \emph{pseudocontexts} in
Ref.~\cite{2023-navara-svozil}.  There is, however, a necessary distinction.
For particular rays $a_i,b_i$ in a Hilbert space, one may happen to have
\begin{equation}
  \sum_i |a_i\rangle\!\langle a_i|
  =\sum_i |b_i\rangle\!\langle b_i|.
  \label{eq:coordinate-identity}
\end{equation}
Equation~\eqref{eq:coordinate-identity} implies equality of Born-probability
sums for those coordinates, but the equality may disappear when the same
orthogonality hypergraph is coordinatized differently.  It also says nothing
about probability models that are not derived from those projectors.

Here we reserve \emph{pseudocontext} for the stronger, finite-combinatorial
notion: the equality must be a balanced linear consequence of the context
equations themselves.  This implements two requirements simultaneously:
\begin{enumerate}
  \item the identity holds for every faithful orthogonal representation of the
        hypergraph, not merely for one selected vector realization;
  \item the identity holds for every admissible probability assignment,
        independently of whether it is classical, quantum, softmax-generated~\cite{svozil-2026-softmax},
        or obtained by another rule.
\end{enumerate}
Only completeness, additivity on mutually exclusive propositions, and a
single context-independent weight for each vertex are used.

The central result is an incidence-matrix characterization.  Besides making
the definition testable, it cleanly separates three issues that can otherwise
be conflated: combinatorial forcing, geometric realizability, and the state
space of a particular probability theory.  We then revisit the
three-dimensional constructions of Ref.~\cite{2023-navara-svozil} and a
20-vertex octagon whose scalar-field representability is analyzed
separately in Ref.~\cite{2026-Navara-Svozil-complex-only-3d-hypergraphs}.

\section{Contexts, weights, and orthogonal representations}
\label{sec:framework}

Let $\Gamma=(V,\E)$ be a finite $n$-uniform hypergraph.  Its vertices are
elementary propositions and every $E\in\E$ is declared to be a complete
context of $n$ mutually exclusive propositions.  We exclude repeated
vertices inside a context.

\begin{definition}[Admissible weight]
An \emph{admissible weight} on $\Gamma$ is a function
$p:V\to[0,1]$ satisfying Eq.~\eqref{eq:context-normalization} for every
$E\in\E$.  The set of admissible weights is denoted by $\mathcal S(\Gamma)$.
The hypergraph is \emph{probabilistically consistent} if
$\mathcal S(\Gamma)\ne\varnothing$.
\end{definition}

The range $[0,1]$ records the probabilistic interpretation, but the forcing
criterion below is deliberately stronger: it requires an identity on the
entire real affine solution space of the normalization equations.  Positivity
will therefore play no part in the proof.

Different theories select different subsets of $\mathcal S(\Gamma)$.
An admissible weight whose range is contained in \ensuremath{\{0,1\}}
is called a two-valued state.  Classical deterministic models use
two-valued states and their convex combinations.  A quantum realization assigns rays to the vertices and uses
the Born rule. A softmax, neural, or statistical parametrization is also covered whenever
its output is a single, globally defined probability assignment
$p:V\to[0,1]$ satisfying every context-normalization equation; that is,
\[
 \sum_{v\in E}p(v)=1
 \qquad\text{for every }E\in\E.
\]
Any probability assignment that is not a single globally defined function
on $V$---so that a vertex shared by different contexts may receive
different values in those contexts---is not a weight on $\Gamma$.  It
instead describes context-labelled outcomes with context-dependent
frequencies or expectation values.

To distinguish properties of the abstract hypergraph from properties of a
particular vector or projector realization, let $\Pc(H)$ denote the set of
rays, that is, the one-dimensional subspaces, of an $n$-dimensional real or
complex Hilbert space $H$.  A context $E\in\E$ is an $n$-element subset of
vertices; it is not itself a projection.  We use the faithful version of the
standard graph-theoretic notion of an orthogonal representation
~\cite{lovasz-79,lovasz-89,Portillo-2015}.

\begin{definition}[Faithful orthogonal representation]
A~\emph{faithful orthogonal representation} (FOR), also called a
\emph{faithful coordinatization}, of $\Gamma=(V,\E)$ in $H$ is an injective
map
\[
 h:V\to\Pc(H).
\]
For each vertex $v\in V$, choose a unit vector $x_v\in H$ spanning the ray
$h(v)$.  The representation is called faithful if, for all distinct
$u,v\in V$,
\begin{equation}
 \begin{split}
  \langle x_u,x_v\rangle=0
  &\quad\Longleftrightarrow\quad h(u)\perp h(v)\\
  &\quad\Longleftrightarrow\quad
  \text{there is an }E\in\E\\[-2pt]
  &\hspace{2.7cm}\text{such that }\{u,v\}\subseteq E.
 \end{split}
 \label{eq:faithful}
\end{equation}
The condition is independent of the chosen unit representatives, since
changing a representative by a sign or a complex phase does not affect
whether the inner product is zero.

Since $|E|=n=\dim H$, the vectors
$\{x_v:v\in E\}$ form an orthonormal basis of $H$ for every $E\in\E$.
Equivalently, define
\[
 P_{h(v)}=|x_v\rangle\langle x_v|.
\]
This is the rank-one orthogonal projection onto the ray $h(v)$; in
particular, it is self-adjoint and idempotent.  For every context
$E\in\E$, the corresponding projections are pairwise orthogonal and satisfy
\begin{equation}
 \sum_{v\in E}P_{h(v)}=I_H.
 \label{eq:projector-context}
\end{equation}
Here \ensuremath{I_H} denotes the identity operator on \ensuremath{H}.
The hypergraph is \emph{geometrically admissible in $H$} if such a faithful
orthogonal representation exists.
\end{definition}

\section{Restrictive definition and certificate theorem}
\label{sec:definition}

Index the rows by $\E$ and the columns by $V$, and let
\begin{equation}
 \ensuremath{M_\Gamma}
 \in\{0,1\}^{\E\times V},\qquad
 \ensuremath{(M_\Gamma)_{E,v}}=
 \begin{cases}
  1,&v\in E,\\
  0,&v\notin E,
 \end{cases}
 \label{eq:incidence}
\end{equation}
be the incidence matrix of $\Gamma$.  Thus, for
$p\in\mathbb R^V$ and $E\in\E$,
\[
 (\ensuremath{M_\Gamma}p)_E
 =\sum_{v\in V}\ensuremath{(M_\Gamma)_{E,v}p(v)}
 =\sum_{v\in E}p(v).
\]
Consequently, the system
\begin{equation}
  \ensuremath{M_\Gamma}p=\one_\E
  \label{eq:matrix-normalization}
\end{equation}
means exactly that
\[
 \sum_{v\in E}p(v)=1
 \qquad\text{for every }E\in\E.
\]
For $A\subseteq V$, let $\chi_A\in\{0,1\}^{V}$ denote its incidence vector.
\begin{definition}[Balanced context certificate]
Let $A,B\subseteq V$ be disjoint.  A vector
$\lambda\in\mathbb Q^{\E}$ is a \emph{balanced context certificate} for
$(A,B)$ if
\begin{equation}
 \ensuremath{M_\Gamma}^{\mathsf T}\lambda=\chi_A-\chi_B,
 \qquad
 \one_\E^{\mathsf T}\lambda=0.
 \label{eq:certificate}
\end{equation}
The first equation means that, in the linear combination of context
equations with coefficients $\lambda_E$, every vertex outside
$A\cup B$ cancels, while vertices in $A$ and $B$ occur with coefficients
$+1$ and $-1$, respectively.  The second equation ensures that the
constant right-hand sides of the context equations cancel as well.
Therefore, for every solution of $\ensuremath{M_\Gamma}p=\one_\E$,
\begin{align*}
 \sum_{a\in A}p(a)-\sum_{b\in B}p(b)
 &= (\chi_A-\chi_B)^{\mathsf T}p
  = \lambda^{\mathsf T}\ensuremath{M_\Gamma}p,\\
 &= \lambda^{\mathsf T}\one_\E
  = 0.
\end{align*}
Hence a balanced context certificate forces
\[
 \sum_{a\in A}p(a)=\sum_{b\in B}p(b).
\]
\end{definition}

\paragraph{Example (balanced context: firefly logic).}
The following elementary five-vertex configuration is the standard firefly
logic~\cite{cohen}.  Let
\[
 V=\{a,b,c,d,x\},
 \qquad
 E_1=\{a,b,x\},
 \qquad
 E_2=\{c,d,x\}.
\]
A real three-dimensional coordinatization is given by
\(a=(1,0,0)\),  \(b=(0,1,0)\) ,\(x=(0,0,1)\),
and
\(c=(1/\sqrt{2})(1,1,0)\),
\(d=(1/\sqrt{2})(1,-1,0)
\).
Thus $E_1$ and $E_2$ are both orthonormal bases, and the corresponding
normalization equations are
\[
 p(a)+p(b)+p(x)=1,
 \qquad
 p(c)+p(d)+p(x)=1.
\]
Subtracting the second equation from the first gives
\[
 p(a)+p(b)=p(c)+p(d).
\]
This equality is expressed by the balanced context certificate
\[
 A=\{a,b\},
 \qquad
 B=\{c,d\},
 \qquad
 \lambda=(1,-1)^{\mathsf T}.
\]
Indeed, ordering the vertices as $(a,b,c,d,x)$, the incidence matrix is
\[
 \ensuremath{M_\Gamma}=
 \begin{pmatrix}
  1&1&0&0&1\\
  0&0&1&1&1
 \end{pmatrix},
\]
and hence
\[
 \begin{split}
  \ensuremath{M_\Gamma}^{\mathsf T}\lambda
  &= (1,1,-1,-1,0)^{\mathsf T}
   =\chi_A-\chi_B,\\
  \one_\E^{\mathsf T}\lambda&=1-1=0.
 \end{split}
\]
The common vertex $x$ cancels from the two sides, while the constants on
the right-hand sides cancel because the two context equations are taken
with opposite coefficients.  This is a balanced context in the present
terminology, but not a pseudocontext under Definition~\ref{def:pseudocontext}:
the two elements $a,b\in A$ occur together in the context $E_1$, and the two
elements $c,d\in B$ occur together in the context $E_2$.  Thus Condition~2
is violated on both sides.

\begin{definition}[Pseudocontext]
\label{def:pseudocontext}
Let $k\ge2$.  A \emph{pseudocontext of size $k$} is a triple
$(\Gamma;A,B)$ such that
\begin{enumerate}
  \item $\Gamma$ is a finite, probabilistically consistent, uniform context
        hypergraph;
  \item $A,B\subseteq V$ are disjoint $k$-element sets such that no context
        contains two distinct elements of $A$ and no context contains two
        distinct elements of $B$;
  \item $(A,B)$ has a balanced context certificate.
\end{enumerate}
The restriction $k\ge2$ excludes the elementary size-one identities
obtained by subtracting context-normalization equations.  Such identities
remain balanced context certificates, but are not called pseudocontexts
here.
It is a \emph{pseudocontext in $H$} if, in addition, $\Gamma$ admits a FOR
in $H$.
\end{definition}

Condition 2 removes ordinary contexts and their incomplete mutually
exclusive subsets.  Under a FOR, each target family is
pairwise nonorthogonal.  Disjointness prevents identities obtained by padding
both sides with the same propositions.

\begin{theorem}[Finite forcing criterion]
\label{thm:forcing}
Suppose the real affine system $\ensuremath{M_\Gamma}p=\one_\E$ is nonempty and let
$d=\chi_A-\chi_B$.  The following are equivalent:
\begin{enumerate}
  \item $d^{\mathsf T}p=0$ is a linear consequence of the context equations,
        that is, it holds for every real solution of
        $\ensuremath{M_\Gamma}p=\one_\E$;
  \item there is a real vector $\lambda$ such that
        $\ensuremath{M_\Gamma}^{\mathsf T}\lambda=d$ and
        $\one_\E^{\mathsf T}\lambda=0$;
  \item there is a rational balanced context certificate for $(A,B)$.
\end{enumerate}
Whenever these conditions hold,
\begin{equation}
  \sum_{a\in A}p(a)=\sum_{b\in B}p(b)
  \label{eq:universal-equality}
\end{equation}
for every admissible weight \ensuremath{p}
on $\Gamma$.
\end{theorem}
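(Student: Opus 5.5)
The plan is to prove the cycle $3\Rightarrow2\Rightarrow1\Rightarrow2\Rightarrow3$, plus the final claim. The implication $3\Rightarrow2$ is trivial, since a rational vector is real. For $2\Rightarrow1$ I will reuse the computation already given after the definition of a balanced context certificate. For every real solution $p$ of $M_\Gamma p=\one_\E$,
\[
 d^{\mathsf T}p=\lambda^{\mathsf T}M_\Gamma p=\lambda^{\mathsf T}\one_\E=0 .
\]
The final assertion then follows immediately. An admissible weight is in particular a real solution of $M_\Gamma p=\one_\E$, so $d^{\mathsf T}p=0$ holds for it, which is exactly Eq.~\eqref{eq:universal-equality}. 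Positivity is never used.

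The substantive step is $1\Rightarrow2$, an affine Farkas/Fredholm-type statement with no inequalities. Fix a particular real solution $p_0$, which exists by hypothesis. The solution set is then $p_0+\ker M_\Gamma$. Condition~1 gives $d^{\mathsf T}(p_0+z)=0$ for all $z\in\ker M_\Gamma$. Taking $z=0$ yields $d^{\mathsf T}p_0=0$, and by linearity $d^{\mathsf T}z=0$ on $\ker M_\Gamma$. Hence $d\in(\ker M_\Gamma)^\perp=\operatorname{im}M_\Gamma^{\mathsf T}$, so $d=M_\Gamma^{\mathsf T}\lambda$ for some real $\lambda$. It remains to check the balance condition: $\one_\E^{\mathsf T}\lambda=(M_\Gamma p_0)^{\mathsf T}\lambda=p_0^{\mathsf T}M_\Gamma^{\mathsf T}\lambda=p_0^{\mathsf T}d=0$. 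The key observation is that the consistency hypothesis is what converts ``$d$ lies in the row space'' into ``the constants cancel''. An alternative route would be to show that the augmented row $(d,0)$ lies in the row space of $[M_\Gamma\mid\one_\E]$ via rank arguments, but the direct orthogonal-complement computation is cleaner.

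For $2\Rightarrow3$, the conditions in~2 form a linear system in $\lambda$,
\[
 \begin{pmatrix} M_\Gamma^{\mathsf T}\\ \one_\E^{\mathsf T}\end{pmatrix}\lambda=\begin{pmatrix} d\\ 0\end{pmatrix},
\]
whose coefficients are all integers. A linear system with rational coefficients that has a real solution also has a rational solution. To see this, note that solvability is equivalent to the coefficient matrix and the augmented matrix having equal rank, and rank is computed the same way over $\mathbb Q$ and over $\R$ because Gaussian elimination stays in $\mathbb Q$. Gaussian elimination over $\mathbb Q$ then produces a rational solution, and this is the desired rational balanced context certificate.

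I expect the only point needing care to be $1\Rightarrow2$, specifically the derivation of $\one_\E^{\mathsf T}\lambda=0$ from the nonemptiness of the affine system. Without that hypothesis, condition~1 holds vacuously and the equivalence fails. The field-descent argument in $2\Rightarrow3$ is standard, but it should be stated explicitly.
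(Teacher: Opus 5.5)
Your proposal is correct and takes essentially the same route as the paper. You prove $2\Rightarrow1$ by the direct computation $d^{\mathsf T}p=\lambda^{\mathsf T}M_\Gamma p=\lambda^{\mathsf T}\one_\E=0$. You prove $1\Rightarrow2$ via $d\in(\ker M_\Gamma)^\perp=\operatorname{im}M_\Gamma^{\mathsf T}$, then evaluate at a particular solution $p_0$ to obtain $\one_\E^{\mathsf T}\lambda=0$, and you get $2\Leftrightarrow3$ by rational Gaussian elimination on the integer system; your rank-based justification of that last step only spells out what the paper states in one line.
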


\begin{proof}
If 2 holds and $\ensuremath{M_\Gamma}p=\one_\E$, then
\[
 d^{\mathsf T}p
 =\lambda^{\mathsf T}\ensuremath{M_\Gamma}p
 =\lambda^{\mathsf T}\one_\E
 =0,
\]
so 2 implies 1 and Eq.~\eqref{eq:universal-equality} follows.

Conversely, choose one solution $p_0$ of $\ensuremath{M_\Gamma}p=\one_\E$.  Every vector
$x\in\ker \ensuremath{M_\Gamma}$ gives another solution $p_0+x$.  By 1,
$d^{\mathsf T}(p_0+x)=d^{\mathsf T}p_0=0$, hence $d$ annihilates
$\ker \ensuremath{M_\Gamma}$.  Finite-dimensional linear algebra gives
$d\in(\ker \ensuremath{M_\Gamma})^\perp=\operatorname{im}\ensuremath{M_\Gamma}^{\mathsf T}$, so
$d=\ensuremath{M_\Gamma}^{\mathsf T}\lambda$ for some real $\lambda$.  Evaluating at $p_0$
yields
$0=d^{\mathsf T}p_0=\lambda^{\mathsf T}\one_\E$, proving 2.

The equations in 2 have integer coefficients.  A consistent linear system
over $\mathbb Q$ has a rational solution, for example by Gaussian
elimination.  Thus 2 and 3 are equivalent.
\end{proof}

The word ``forced'' in Definition~\ref{def:pseudocontext} is therefore literal: the identity is
valid even before nonnegativity or a particular state model is imposed.  If
one only asks for equality on the polytope $\mathcal S(\Gamma)$, additional
equalities may arise because positivity forces certain vertices to zero.
Those boundary effects are intentionally excluded from the definition.

\begin{corollary}[Integral form]
Every pseudocontext admits an integer $q\ge1$ and
$\mu\in\mathbb Z^{\E}$ such that
\begin{equation}
 \ensuremath{M_\Gamma}^{\mathsf T}\mu=q(\chi_A-\chi_B),
 \qquad \one_\E^{\mathsf T}\mu=0.
 \label{eq:integer-certificate}
\end{equation}
\end{corollary}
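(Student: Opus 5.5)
By Definition~\ref{def:pseudocontext}, Condition~3, every pseudocontext $(\Gamma;A,B)$ has a balanced context certificate, that is, a rational vector $\lambda\in\mathbb Q^{\E}$ satisfying Eq.~\eqref{eq:certificate}. Theorem~\ref{thm:forcing} is not needed, since rationality is already part of the definition; it merely confirms that such a $\lambda$ exists whenever the forcing identity holds. The plan is to clear denominators.

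Since $\E$ is finite, write each entry as $\lambda_E=r_E/s_E$ with $r_E\in\mathbb Z$ and $s_E\in\mathbb Z_{\ge1}$. Let $q$ be the least common multiple of the $s_E$, or simply their product, so that $q\ge1$ is an integer. Set $\mu=q\lambda$. Each $\mu_E=q r_E/s_E$ is an integer because $s_E$ divides $q$, so $\mu\in\mathbb Z^{\E}$.

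Both equations in Eq.~\eqref{eq:certificate} are linear in $\lambda$, so multiplying them by $q$ gives
\[
 \ensuremath{M_\Gamma}^{\mathsf T}\mu=q\,\ensuremath{M_\Gamma}^{\mathsf T}\lambda=q(\chi_A-\chi_B),
 \qquad
 \one_\E^{\mathsf T}\mu=q\,\one_\E^{\mathsf T}\lambda=0,
\]
which is Eq.~\eqref{eq:integer-certificate}.

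No step is a genuine obstacle. The only point to check is that $q$ is a positive integer, which holds because $\E$ is finite and every denominator is a positive integer.
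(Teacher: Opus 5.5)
Your proof is correct and takes the same approach as the paper: start from the rational certificate guaranteed by Condition~3, then clear denominators by multiplying both linear certificate equations by a common denominator $q\ge1$. The paper adds a remark that applying an admissible weight to the integral form and dividing by $q$ recovers the forced equality, but this is not needed to establish the corollary.
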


\begin{proof}
Clear the denominators of a rational certificate.  Applying an admissible
real-valued weight to Eq.~\eqref{eq:integer-certificate} gives $q$ times the
desired equality, and division by $q$ completes the argument.
\end{proof}

\begin{corollary}[Independence of coordinatization]
\label{cor:coordinates}
Let $(\Gamma;A,B)$ be a pseudocontext and let $h$ be any FOR of $\Gamma$
in an $n$-dimensional real or complex Hilbert
space.  Then
\begin{equation}
 \sum_{a\in A}P_{h(a)}=\sum_{b\in B}P_{h(b)}.
 \label{eq:operator-forced}
\end{equation}
Consequently the Born-probability sums are equal for every density operator,
independently of the chosen FOR.
\end{corollary}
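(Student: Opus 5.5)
The plan is to lift the scalar certificate identity to the operator level by replacing each scalar weight $p(v)$ with the projector $P_{h(v)}$. The map $v\mapsto P_{h(v)}$ satisfies the context equations with $I_H$ on the right-hand side, and the balanced certificate then forces the operator identity~\eqref{eq:operator-forced} by exactly the same cancellation used in the proof of Theorem~\ref{thm:forcing}.

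By Condition~3 of Definition~\ref{def:pseudocontext}, there is a balanced context certificate $\lambda\in\mathbb Q^{\E}$ with $M_\Gamma^{\mathsf T}\lambda=\chi_A-\chi_B$ and $\one_\E^{\mathsf T}\lambda=0$. We form the operator
\[
 S=\sum_{E\in\E}\lambda_E\sum_{v\in E}P_{h(v)}
\]
and evaluate it in two ways.

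\emph{First evaluation.} By Eq.~\eqref{eq:projector-context}, each inner sum equals $I_H$, so $S=\bigl(\one_\E^{\mathsf T}\lambda\bigr)I_H=0$.

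\emph{Second evaluation.} Interchanging the order of summation gives
\[
 S=\sum_{v\in V}\Bigl(\sum_{E\ni v}\lambda_E\Bigr)P_{h(v)}
  =\sum_{v\in V}(M_\Gamma^{\mathsf T}\lambda)_v\,P_{h(v)}
  =\sum_{a\in A}P_{h(a)}-\sum_{b\in B}P_{h(b)}.
\]
Equating the two evaluations yields~\eqref{eq:operator-forced}.

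Put abstractly, the identity in Theorem~\ref{thm:forcing} holds for every solution of $M_\Gamma p=\one_\E$ with values in any real vector space once $\one$ is interpreted as a fixed vector, here $I_H$ in the space of self-adjoint operators. Only real coefficients $\lambda_E$ are involved, so the argument works over both $\R$ and $\C$. It uses only Eq.~\eqref{eq:projector-context}, which follows because $\{x_v:v\in E\}$ is an orthonormal basis of $H$ for every context $E$. Faithfulness and injectivity of $h$ are not needed for this step.

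For the consequence, fix a density operator $\rho$. Multiplying~\eqref{eq:operator-forced} by $\rho$ and taking traces, linearity of the trace gives $\sum_{a\in A}\operatorname{tr}(\rho P_{h(a)})=\sum_{b\in B}\operatorname{tr}(\rho P_{h(b)})$ for every FOR $h$. An equivalent route is to note that $p_\rho(v)=\operatorname{tr}(\rho P_{h(v)})$ is an admissible weight, by Eq.~\eqref{eq:projector-context} and $\operatorname{tr}\rho=1$, and then apply Eq.~\eqref{eq:universal-equality}.

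No step is a real obstacle. The only point needing care is to note explicitly that the certificate argument is purely linear and therefore transfers from scalars to operators.
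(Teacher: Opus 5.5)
Your proposal is correct and follows the paper's proof essentially verbatim: multiply each operator equation~\eqref{eq:projector-context} by $\lambda_E$, sum over $E\in\E$, and regroup by vertex using $M_\Gamma^{\mathsf T}\lambda=\chi_A-\chi_B$ and $\one_\E^{\mathsf T}\lambda=0$. Your trace step for density operators is the immediate consequence the paper leaves implicit. One small precision: the step does need the orthogonality-within-contexts direction of faithfulness, to get orthonormal bases. Only the converse direction, together with injectivity, is unused.
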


\begin{proof}
Multiply each operator equation~\eqref{eq:projector-context} by the
corresponding certificate coefficient and sum over $E\in\E$.  The left-hand
side becomes
$\sum_v(\ensuremath{M_\Gamma}^{\mathsf T}\lambda)_vP_{h(v)}$, while the right-hand side is
$(\ensuremath{\one_\E}^{\mathsf T}\lambda)I_H=0$.  Equation~\eqref{eq:certificate} gives
Eq.~\eqref{eq:operator-forced}.
\end{proof}

This proof does not invoke Gleason's theorem.  Gleason's theorem
\cite{Gleason} characterizes quantum states in dimensions at least three,
but the pseudocontext equality already follows from the finite context
equations.

\section{Covering certificates and exact discovery}
\label{sec:coverings}

The most transparent certificates compare two multisets of contexts.
Suppose $\mathcal F_+$ and $\mathcal F_-$ contain the same number of
contexts, counted with multiplicity.  If their vertex multiplicities agree
away from $A\cup B$ and their residual incidence difference is
$\chi_A-\chi_B$, then assigning $+1$ to $\mathcal F_+$, $-1$ to
$\mathcal F_-$, and zero elsewhere gives Eq.~\eqref{eq:certificate}.  We call
this a \emph{complementary covering certificate}.

The general criterion is no harder to test.  For fixed $A,B$, exact rational
row reduction of
\begin{equation}
 \begin{pmatrix}\ensuremath{M_\Gamma}^{\mathsf T}\\ \one_\E^{\mathsf T}\end{pmatrix}\lambda
 =
 \begin{pmatrix}\chi_A-\chi_B\\0\end{pmatrix}
 \label{eq:algorithm}
\end{equation}
either returns a certificate or proves that none exists.  To discover
pseudocontexts rather than verify a proposed pair, compute the consequence
space
\begin{equation}
 \mathcal C_\Gamma
 =\left\{\ensuremath{M_\Gamma^{\mathsf T}\lambda}:
          \lambda\in\mathbb Q^{\E},\
          \ensuremath{\one_\E^{\mathsf T}\lambda=0}\right\}
 \label{eq:consequence-space}
\end{equation}
Here \ensuremath{\lambda} assigns a rational coefficient to every
context-normalization equation.  The condition
\ensuremath{\one_\E^{\mathsf T}\lambda=0}
means that these coefficients sum to zero, so the constant right-hand
sides cancel.  Multiplication by \ensuremath{M_\Gamma^{\mathsf T}}
records the resulting coefficient of each vertex.  Thus
\ensuremath{\mathcal C_\Gamma}
is precisely the space of vertex-coefficient vectors obtained from
balanced rational linear combinations of the context equations.
This space can then be searched for sparse vectors with entries in $\{-1,0,1\}$.  The positive
and negative supports of such a vector are candidate target families; the
nonexclusivity and disjointness conditions are then checked combinatorially.
Integer programming can be used to minimize the number of contexts or
vertices in a certificate.

For an $n$-uniform hypergraph, every $d\in\mathcal C_\Gamma$ satisfies
$\one_V^{\mathsf T}d=0$ because
\[
 \one_V^{\mathsf T}\ensuremath{M_\Gamma}^{\mathsf T}\lambda
 =n\one_\E^{\mathsf T}\lambda=0.
\]
Thus a $\{-1,0,1\}$ consequence automatically has equally large positive
and negative supports.  Equal pseudocontext size is not an extra numerical
coincidence; it follows from uniformity and balance.

\section{A size-three real pseudocontext}
\label{sec:mayet}

Consider the $15$-vertex, eight-context hypergraph introduced in this role in
Ref.~\cite{2023-navara-svozil}.  Its contexts are
\begin{align}
 E_1&=\{1,2,3\},& E_2&=\{3,4,5\},\nonumber\\
 E_3&=\{6,7,8\},& E_4&=\{8,9,10\},\nonumber\\
 E_5&=\{11,12,13\},& E_6&=\{13,14,15\},\label{eq:mayet-edges}\\
 E_7&=\{2,7,12\},& E_8&=\{4,9,14\}.\nonumber
\end{align}
Let
\begin{equation}
 A=\{1,6,11\},\qquad B=\{5,10,15\}.
 \label{eq:mayet-targets}
\end{equation}
No two vertices within either triple share a context.

The four contexts
\begin{equation}
 \mathcal F_+=\{E_1,E_3,E_5,E_8\}
 \label{eq:cover-plus}
\end{equation}
cover every vertex once except $B$, whereas
\begin{equation}
 \mathcal F_-=\{E_2,E_4,E_6,E_7\}
 \label{eq:cover-minus}
\end{equation}
cover every vertex once except $A$.  Therefore
\begin{equation}
 \sum_{j\in\{1,3,5,8\}}\chi_{E_j}
 -\sum_{j\in\{2,4,6,7\}}\chi_{E_j}
 =\chi_A-\chi_B.
 \label{eq:mayet-certificate}
\end{equation}
The two sides use four contexts each, so the certificate is balanced.  Every
admissible weight consequently satisfies
\begin{equation}
 p(1)+p(6)+p(11)=p(5)+p(10)+p(15).
 \label{eq:mayet-equality}
\end{equation}

This hypergraph has classical partition representations and real
three-dimensional FORs~\cite{2023-navara-svozil}.  Hence
Eq.~\eqref{eq:mayet-equality} holds in its classical models, in every one of
its quantum FORs, and for any other admissible weight.
The numerical range of the common sum may still depend on the chosen theory;
the equality itself does not.

\section{A size-two complex pseudocontext}
\label{sec:octagon}

A smaller target family is forced by a $20$-vertex, $12$-context hypergraph.  Let
$v_0,\ldots,v_7$ be the midpoints of a cyclic octagon and
$v_{01},v_{12},\ldots,v_{70}$ its corners, with indices understood modulo eight.
The
eight perimeter contexts are
\begin{equation}
 E_i=\{v_{i-1,i},v_i,v_{i,i+1}\},\qquad i\in\mathbb Z_8,
 \label{eq:octagon-perimeter}
\end{equation}
Four further contexts are
\begin{align}
 D_0&=\{v_0,a_1,v_4\},&D_2&=\{v_2,a_2,v_6\},\nonumber\\
 D_1&=\{v_1,b_2,v_5\},&D_3&=\{v_3,b_1,v_7\}.
 \label{eq:octagon-inner}
\end{align}

Take $A=\{a_1,a_2\}$ and $B=\{b_1,b_2\}$.  The six contexts
\begin{align}
 \mathcal G_+&=\{D_0,D_2,E_1,E_3,E_5,E_7\},\nonumber\\
 \mathcal G_-&=\{D_1,D_3,E_0,E_2,E_4,E_6\}
 \label{eq:octagon-covers}
\end{align}
each cover all $16$ outer vertices once.  The first also covers $a_1,a_2$,
while the second covers $b_1,b_2$.  The two coverings are shown in
Fig.~\ref{fig:octagon}.

\begin{figure*}[t]
\centering

\begin{tikzpicture}[scale=1.6,
  every label/.style={font=\footnotesize},
  context/.style={thick, gray!40, line cap=round},
  cover/.style={ultra thick, black, line cap=round},
  covered/.style={circle, draw=black, fill=black, minimum size=5pt, inner sep=0pt},
  covered_inner/.style={circle, draw=black, fill=black, minimum size=6pt, inner sep=0pt},
  uncovered_inner/.style={circle, draw=black, fill=white, minimum size=6pt, inner sep=0pt},
  targetA/.style={line width=3.8pt,blue,line cap=round},
  targetB/.style={line width=3.8pt,red,dash pattern=on 8pt off 10pt,line cap=round}]

  \node[font=\normalsize] at (-2.3, 2.3) {(a) $\mathcal G_+$};

  \def\Rmid{2}
  \def\Rcor{2.1648}
  \def\Rinn{0.95}

  \foreach \i/\j/\ang in {0/1/22.5, 1/2/67.5, 2/3/112.5, 3/4/157.5, 4/5/202.5, 5/6/247.5, 6/7/292.5, 7/0/337.5} {
    \coordinate (v\i\j) at (\ang:\Rcor);
  }
  \foreach \i/\ang in {0/0, 1/45, 2/90, 3/135, 4/180, 5/225, 6/270, 7/315} {
    \coordinate (v\i) at (\ang:\Rmid);
  }
  \coordinate (a1) at (180:\Rinn);
  \coordinate (a2) at (90:\Rinn);
  \coordinate (b1) at (135:\Rinn);
  \coordinate (b2) at (45:\Rinn);

  \draw[cover] (v0) -- (v4);
  \draw[cover] (v2) -- (v6);
  \draw[cover] (v01) -- (v12);
  \draw[cover] (v23) -- (v34);
  \draw[cover] (v45) -- (v56);
  \draw[cover] (v67) -- (v70);

  \draw[context] (v1) -- (v5);
  \draw[context] (v3) -- (v7);
  \draw[context] (v70) -- (v01);
  \draw[context] (v12) -- (v23);
  \draw[context] (v34) -- (v45);
  \draw[context] (v56) -- (v67);

  \draw[targetA] (a1) -- (a2);
  \draw[targetB] (b1) -- (b2);

  \foreach \i/\j/\ang in {0/1/22.5, 1/2/67.5, 2/3/112.5, 3/4/157.5, 4/5/202.5, 5/6/247.5, 6/7/292.5, 7/0/337.5} {
    \node[covered, label={\ang:{$v_{\i\j}$}}] at (v\i\j) {};
  }
  \foreach \i/\ang in {0/0, 1/45, 2/90, 3/135, 4/180, 5/225, 6/270, 7/315} {
    \node[covered, label={\ang:{$v_\i$}}] at (v\i) {};
  }

  \node[covered_inner, label={[label distance=1pt]above:{$a_1$}}] at (a1) {};
  \node[covered_inner, label={[label distance=1pt]right:{$a_2$}}] at (a2) {};
  \node[uncovered_inner, label={[label distance=1pt]above:{$b_1$}}] at (b1) {};
  \node[uncovered_inner, label={[label distance=1pt]above:{$b_2$}}] at (b2) {};
\end{tikzpicture}

\vspace{1.5em}

\begin{tikzpicture}[scale=1.6,
  every label/.style={font=\footnotesize},
  context/.style={thick, gray!40, line cap=round},
  cover/.style={ultra thick, black, line cap=round},
  covered/.style={circle, draw=black, fill=black, minimum size=5pt, inner sep=0pt},
  covered_inner/.style={circle, draw=black, fill=black, minimum size=6pt, inner sep=0pt},
  uncovered_inner/.style={circle, draw=black, fill=white, minimum size=6pt, inner sep=0pt},
  targetA/.style={line width=3.8pt,blue,line cap=round},
  targetB/.style={line width=3.8pt,red,dash pattern=on 8pt off 10pt,line cap=round}]

  \node[font=\normalsize] at (-2.3, 2.3) {(b) $\mathcal G_-$};

  \def\Rmid{2}
  \def\Rcor{2.1648}
  \def\Rinn{0.95}

  \foreach \i/\j/\ang in {0/1/22.5, 1/2/67.5, 2/3/112.5, 3/4/157.5, 4/5/202.5, 5/6/247.5, 6/7/292.5, 7/0/337.5} {
    \coordinate (v\i\j) at (\ang:\Rcor);
  }
  \foreach \i/\ang in {0/0, 1/45, 2/90, 3/135, 4/180, 5/225, 6/270, 7/315} {
    \coordinate (v\i) at (\ang:\Rmid);
  }
  \coordinate (a1) at (180:\Rinn);
  \coordinate (a2) at (90:\Rinn);
  \coordinate (b1) at (135:\Rinn);
  \coordinate (b2) at (45:\Rinn);

  \draw[cover] (v1) -- (v5);
  \draw[cover] (v3) -- (v7);
  \draw[cover] (v70) -- (v01);
  \draw[cover] (v12) -- (v23);
  \draw[cover] (v34) -- (v45);
  \draw[cover] (v56) -- (v67);

  \draw[context] (v0) -- (v4);
  \draw[context] (v2) -- (v6);
  \draw[context] (v01) -- (v12);
  \draw[context] (v23) -- (v34);
  \draw[context] (v45) -- (v56);
  \draw[context] (v67) -- (v70);

  \draw[targetA] (a1) -- (a2);
  \draw[targetB] (b1) -- (b2);

  \foreach \i/\j/\ang in {0/1/22.5, 1/2/67.5, 2/3/112.5, 3/4/157.5, 4/5/202.5, 5/6/247.5, 6/7/292.5, 7/0/337.5} {
    \node[covered, label={\ang:{$v_{\i\j}$}}] at (v\i\j) {};
  }
  \foreach \i/\ang in {0/0, 1/45, 2/90, 3/135, 4/180, 5/225, 6/270, 7/315} {
    \node[covered, label={\ang:{$v_\i$}}] at (v\i) {};
  }

  \node[uncovered_inner, label={[label distance=1pt]above:{$a_1$}}] at (a1) {};
  \node[uncovered_inner, label={[label distance=1pt]right:{$a_2$}}] at (a2) {};
  \node[covered_inner, label={[label distance=1pt]above:{$b_1$}}] at (b1) {};
  \node[covered_inner, label={[label distance=1pt]above:{$b_2$}}] at (b2) {};
\end{tikzpicture}
\caption{The two context coverings used in the incidence-vector certificate.
Bold black lines are the selected contexts and thin gray lines are the
unselected contexts.  Panel (a) shows $\mathcal G_+$ and panel (b) shows
$\mathcal G_-$.  In both panels, the very thick blue solid line joins
$A=\{a_1,a_2\}$ and the very thick red dashed line joins $B=\{b_1,b_2\}$.  Filled
inner vertices are covered by the respective covering.}
\label{fig:octagon}
\end{figure*}

Subtraction gives the balanced certificate
\begin{equation}
 \sum_{G\in\mathcal G_+}\chi_G-
 \sum_{G\in\mathcal G_-}\chi_G
 =\chi_{a_1}+\chi_{a_2}-\chi_{b_1}-\chi_{b_2}.
 \label{eq:octagon-certificate}
\end{equation}
Thus every admissible weight \ensuremath{p}
satisfies
\begin{equation}
 p(a_1)+p(a_2)=p(b_1)+p(b_2).
 \label{eq:octagon-equality}
\end{equation}

The same equality also has a direct two-cover proof.  Let
\begin{equation*}
 S_{\rm out}(p)=\sum_{i=0}^{7}p(v_i)+
 \sum_{i=0}^{7}p(v_{i,i+1}),
\end{equation*}
where the corner indices are read modulo eight.  Adding the six
normalization equations belonging to the first cover gives
\begin{equation}
 6=S_{\rm out}(p)+p(a_1)+p(a_2),
 \label{eq:octagon-cover-plus}
\end{equation}
whereas adding those belonging to the second cover gives
\begin{equation}
 6=S_{\rm out}(p)+p(b_1)+p(b_2).
 \label{eq:octagon-cover-minus}
\end{equation}
The outer-vertex contribution is identical because each cover contains every
outer vertex exactly once.  Subtracting these two equations recovers
Eq.~\eqref{eq:octagon-equality}.  Thus the same pseudocontext identity is
visible both as the incidence-vector certificate and as the difference of two
complete covers.

The forcing proof contains no coordinates and is independent of the
Hilbert-space scalar field.  Realizability
is nevertheless field-sensitive.  The hypergraph has an exact FOR in
$\C^3$, while it has no FOR in
$\R^3$; exact coordinates and the obstruction are proved in
Ref.~\cite{2026-Navara-Svozil-complex-only-3d-hypergraphs}.  It is therefore a
pseudocontext in $\C^3$, but not in $\R^3$.  This example illustrates why
forcing and geometric admissibility must be stated as separate conditions.

\section{Minimum sizes over the real and complex fields}
\label{sec:minimal}

The following threshold is relative to Definition~\ref{def:pseudocontext},
which stipulates \ensuremath{k\ge2}.

\begin{proposition}[No size-one Hilbert-space pseudocontext]
\label{prop:no-k1}
Even if Definition~\ref{def:pseudocontext} were provisionally
extended to \ensuremath{k=1}, no triple satisfying its conditions could admit a
FOR in a real or complex Hilbert space.
\end{proposition}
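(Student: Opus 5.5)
Suppose, for contradiction, that $(\Gamma;\{a\},\{b\})$ satisfies the conditions of Definition~\ref{def:pseudocontext} with $k=1$, and that $h$ is a FOR of $\Gamma$ in $H$. Condition~2 is vacuous for singletons. Disjointness gives $a\ne b$, and Condition~3 supplies a balanced context certificate $\lambda$ with $M_\Gamma^{\mathsf T}\lambda=\chi_{\{a\}}-\chi_{\{b\}}$ and $\one_\E^{\mathsf T}\lambda=0$. The plan is to push this certificate into operator form and then use injectivity of $h$.

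First, apply Corollary~\ref{cor:coordinates}. Its proof uses only Eq.~\eqref{eq:certificate} and the operator identities~\eqref{eq:projector-context}, never the size restriction $k\ge2$. Multiplying each identity~\eqref{eq:projector-context} by $\lambda_E$ and summing over $E\in\E$ therefore gives
\[
 P_{h(a)}=P_{h(b)}.
\]

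Second, observe that a rank-one orthogonal projection determines its range. Hence $h(a)=\operatorname{im}P_{h(a)}=\operatorname{im}P_{h(b)}=h(b)$. Since a FOR is injective by definition and $a\ne b$, this is a contradiction. No triple of size one can therefore be a pseudocontext in $H$, for either scalar field.

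The argument does not need the faithfulness condition~\eqref{eq:faithful} beyond injectivity. It also does not need probabilistic consistency, because a FOR yields Born weights anyway. The main point to check carefully is that the operator identity really is available at $k=1$, i.e.\ that Corollary~\ref{cor:coordinates} does not silently rely on $k\ge2$. I expect this to be straightforward, since its proof is a linear combination of Eq.~\eqref{eq:projector-context} with the certificate coefficients.

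If one prefers to avoid operators, an alternative is to compare Born probabilities. Choose the pure state $\rho=P_{h(a)}$. Forcing gives $p(a)=p(b)$ for this Born weight, so $1=|\langle x_a,x_b\rangle|^2$. Equality in the Cauchy--Schwarz inequality then forces $h(a)=h(b)$, again contradicting injectivity.
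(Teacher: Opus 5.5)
Your proof is correct and is essentially the paper's argument: you combine the projector identities \eqref{eq:projector-context} with the certificate coefficients to get $P_{h(a)}=P_{h(b)}$, and then conclude $h(a)=h(b)$, contradicting injectivity of $h$. Your check that the proof of Corollary~\ref{cor:coordinates} never uses $k\ge2$ is what makes this step legitimate, and your Born-state alternative via Cauchy--Schwarz is also valid.
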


\begin{proof}
Suppose such a triple had \ensuremath{A=\{a\}},
\ensuremath{B=\{b\}}, a~balanced certificate \ensuremath{\lambda},
and a FOR \ensuremath{h}.  Multiplying the projector equations for the
contexts by the corresponding coefficients \ensuremath{\lambda_E},
summing, and using the certificate equations gives
\[
 \ensuremath{P_{h(a)}-P_{h(b)}
 = (\one_\E^{\mathsf T}\lambda)I_H=0.}
\]
Equality of the rank-one projectors means equality of the rays,
contradicting injectivity of \ensuremath{h} because \ensuremath{a\ne b}.
\end{proof}

\begin{proposition}[Real rigidity at size two]
\label{prop:real-k2}
There is no pseudocontext of size two in a real Hilbert space whose two target
families are internally nonorthogonal.
\end{proposition}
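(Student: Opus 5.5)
The plan is to argue by contradiction and reduce the statement to a rigidity fact about sums of two rank-one projectors on a real plane. Suppose $(\Gamma;A,B)$ is a pseudocontext of size two with $A=\{a_1,a_2\}$ and $B=\{b_1,b_2\}$, and suppose it has a FOR $h$ in a real Hilbert space $H$. By Corollary~\ref{cor:coordinates} we get
\[
 P_{h(a_1)}+P_{h(a_2)}=P_{h(b_1)}+P_{h(b_2)}=:X .
\]
Condition~2 of Definition~\ref{def:pseudocontext} and faithfulness~\eqref{eq:faithful} give $\langle x_{a_1},x_{a_2}\rangle\neq0$ and $\langle x_{b_1},x_{b_2}\rangle\neq0$. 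Injectivity of $h$ gives that $h(a_1)\neq h(a_2)$. The goal is to show that $X$ determines the unordered pair of rays $\{h(b_1),h(b_2)\}$. It would then equal $\{h(a_1),h(a_2)\}$, which contradicts injectivity of $h$, since $A\cap B=\varnothing$.

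\textbf{Step 1: locate the plane.} The range of $X$ is $W=\operatorname{span}\{x_{a_1},x_{a_2}\}$. This space is two-dimensional because the rays are distinct. Since $X\ge0$, the range of $P_{h(b_1)}+P_{h(b_2)}$ is $\operatorname{span}\{x_{b_1},x_{b_2}\}$, so this span must also equal $W$. Hence all four unit vectors lie in the real plane $W\cong\R^2$, and everything reduces to $2\times2$ real symmetric matrices on $W$.

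\textbf{Step 2: parametrize real rank-one projectors.} Write a unit vector as $x=(\cos\theta,\sin\theta)$. Then
\[
 P_x=\tfrac12\bigl(I+R(2\theta)\bigr),\qquad
 R(\phi)=\begin{pmatrix}\cos\phi&\sin\phi\\ \sin\phi&-\cos\phi\end{pmatrix}.
\]
The map from rays to the unit circle, $h(v)\mapsto e^{2i\theta_v}$, is a bijection. The identity $X=P_{a_1}+P_{a_2}=P_{b_1}+P_{b_2}$ then becomes
\[
 u_1+u_2=w_1+w_2=z\in\C,
\]
where the $u_j$ and $w_j$ are unit complex numbers. The trace part gives no information, since both sides equal $2$.

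\textbf{Step 3: uniqueness.} We have $|z|^2=2+2\cos2(\theta_{a_1}-\theta_{a_2})=4\langle x_{a_1},x_{a_2}\rangle^2$. This is nonzero by the nonorthogonality hypothesis, and this is exactly where Condition~2 enters. Two unit complex numbers with a prescribed nonzero sum $z$ are unique up to order: they are $\frac{z}{|z|}e^{\pm i\alpha}$ with $\cos\alpha=|z|/2$. Elementary geometry confirms this: the two unit circles centred at $0$ and at $z$ meet in at most two points. Therefore $\{w_1,w_2\}=\{u_1,u_2\}$, and so $\{h(b_1),h(b_2)\}=\{h(a_1),h(a_2)\}$, which is the required contradiction.

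\textbf{Main obstacle.} The hardest step is keeping Step~3 honest. Uniqueness fails exactly when $z=0$, that is, when the rays are orthogonal; in that case $X=I_W$ has infinitely many decompositions, which is why the hypothesis is needed. It also fails over $\C$: there the Bloch-sphere parametrization replaces the unit circle by $S^2$, and unit vectors with a prescribed sum form a whole circle of solutions. I would remark on this to explain why the complex size-two octagon example is not contradicted.
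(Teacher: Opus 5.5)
Your argument is correct and follows essentially the same route as the paper: you pass to the operator identity, restrict to the common two-dimensional range, map rays to the unit circle by the double-angle map $[x]\mapsto e^{2i\theta}$, and use that two unit complex numbers with a prescribed nonzero sum are unique up to order. The only cosmetic difference is that the paper first diagonalizes $T$ and normalizes $c=\langle a_1,a_2\rangle>0$, so the prescribed sum is the real number $2c$, whereas you work with a general $z$ in arbitrary coordinates.
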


\begin{proof}
Suppose a FOR supplies unit representatives
$a_1,a_2,b_1,b_2$ in a real Hilbert space satisfying
\[
 P_{a_1}+P_{a_2}=P_{b_1}+P_{b_2}=:T.
\]
The range of $T$ is the common two-dimensional span of both pairs.
Internal nonorthogonality gives
\ensuremath{\langle a_1,a_2\rangle\ne0}.  Change the sign of
\ensuremath{a_2} if necessary and set
\ensuremath{c=\langle a_1,a_2\rangle>0}.  Faithfulness makes the rays
\ensuremath{[a_1]} and \ensuremath{[a_2]} distinct, so
\ensuremath{c<1}.  Hence \ensuremath{0<c<1}.
In an eigenbasis of $T$ its eigenvalues are
$1+c$ and $1-c$, which are distinct.  Write the two unit vectors of any
decomposition in this eigenbasis as
$(\cos\phi,\sin\phi)$ and $(\cos\psi,\sin\psi)$.  The diagonal and
off-diagonal entries give
\[
 \ensuremath{\cos^2\phi+\cos^2\psi=1+c},\qquad
 \ensuremath{\sin^2\phi+\sin^2\psi=1-c},
\]
\[
 \ensuremath{\sin\phi\cos\phi+\sin\psi\cos\psi=0.}
\]
Subtracting the first two equations and applying the doubled-angle
identities yields
\[
 \ensuremath{\cos(2\phi)+\cos(2\psi)=2c},\qquad
 \ensuremath{\sin(2\phi)+\sin(2\psi)=0.}
\]
Equivalently,
\[
 \ensuremath{e^{2i\phi}+e^{2i\psi}=2c.}
\]
Two points on the unit circle with this sum are
the conjugate pair $c\pm i\sqrt{1-c^2}$.  The two rays are therefore uniquely
determined, up to interchange, by $T$.  Hence
$\{[b_1],[b_2]\}=\{[a_1],[a_2]\}$, contradicting disjointness.
Thus no allowed real size-two pair exists.
\end{proof}

Complex phases remove the real uniqueness used in the proof.  The octagon of
Section~\ref{sec:octagon} shows that this phase freedom can be enforced by a
finite context hypergraph rather than merely arranged in selected
coordinates.

\begin{theorem}[Sharp field-dependent threshold]
\label{thm:threshold}
Under Definition~\ref{def:pseudocontext}, the minimum permitted
target size \ensuremath{k=2} is attained by a pseudocontext admitting a FOR
over \ensuremath{\C}.  The minimum size over \ensuremath{\R} is
\ensuremath{3}.
\end{theorem}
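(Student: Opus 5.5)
The theorem combines three facts already in the paper, so I would assemble them in turn. The definition stipulates $k\ge2$, and by Proposition~\ref{prop:no-k1} no size-one identity could admit a FOR even if $k=1$ were allowed. So two items remain: exhibit size two over $\C$, and show over $\R$ that size two is impossible while size three occurs.

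\emph{Complex attainment.} I would take the octagon hypergraph of Section~\ref{sec:octagon} with $A=\{a_1,a_2\}$ and $B=\{b_1,b_2\}$, and check the three conditions of Definition~\ref{def:pseudocontext}. Balance is Eq.~\eqref{eq:octagon-certificate}, which uses six contexts with sign $+1$ and six with sign $-1$. Disjointness is clear from the labels. For Condition~2, $a_1$ occurs only in $D_0$ and $a_2$ only in $D_2$, so no context contains both; the same holds for $b_2\in D_1$ and $b_1\in D_3$. Probabilistic consistency follows from the existence of a FOR, since any density operator, for example $I/3$, yields an admissible weight through Eq.~\eqref{eq:projector-context}. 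The FOR in $\C^3$ is the one cited from Ref.~\cite{2026-Navara-Svozil-complex-only-3d-hypergraphs}. Hence this is a size-two pseudocontext in $\C^3$.

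\emph{Real lower bound.} Let $(\Gamma;A,B)$ be a size-two pseudocontext with a FOR $h$ in a real Hilbert space. Corollary~\ref{cor:coordinates} gives
\[
P_{h(a_1)}+P_{h(a_2)}=P_{h(b_1)}+P_{h(b_2)}.
\]
Condition~2 says $a_1,a_2$ share no context, so faithfulness, Eq.~\eqref{eq:faithful}, makes them nonorthogonal; the same holds for $b_1,b_2$. Proposition~\ref{prop:real-k2} then gives a contradiction. The only point to handle with care is that this proposition assumes "internally nonorthogonal," and it is exactly faithfulness together with Condition~2 that supplies this hypothesis. It follows that the real minimum is at least $3$.

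\emph{Real attainment.} The 15-vertex hypergraph of Section~\ref{sec:mayet} with the triples of Eq.~\eqref{eq:mayet-targets} has the balanced certificate Eq.~\eqref{eq:mayet-certificate}, using four contexts on each side. Condition~2 is checked by inspecting the edge list Eq.~\eqref{eq:mayet-edges}: vertices $1,6,11$ lie only in $E_1,E_3,E_5$ respectively, and $5,10,15$ lie only in $E_2,E_4,E_6$. Real three-dimensional FORs are quoted from Ref.~\cite{2023-navara-svozil}, and consistency again follows from the FOR. I expect the main obstacle to be this external realizability input: the argument relies on the cited coordinatizations, both in $\R^3$ here and in $\C^3$ for the octagon, rather than proving them. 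If asked to be self-contained, I would try to write explicit real vectors chain by chain, starting with $E_7$ and $E_8$ as two bases and completing each triple $E_1,E_2$ etc. by cross products, then check faithfulness generically.
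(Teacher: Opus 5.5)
Your proposal is correct and follows the paper's proof. Both use the stipulated bound $k\ge2$ from Definition~\ref{def:pseudocontext}, the octagon with its cited FOR in $\C^3$ for attainment over $\C$, Proposition~\ref{prop:real-k2} to exclude size two over $\R$, and the 15-vertex hypergraph with its cited real FOR for size three. Your added checks are accurate and make explicit what the paper's three-line proof leaves implicit: Condition~2 for both examples, consistency via $\rho=I/3$ (automatic anyway for any $n$-uniform hypergraph via $p\equiv 1/n$), and the observation that faithfulness plus Condition~2 supplies the nonorthogonality hypothesis of Proposition~\ref{prop:real-k2}.
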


\begin{proof}
The lower bound \ensuremath{k\ge2} over \ensuremath{\C}
is part of Definition~\ref{def:pseudocontext}, and the octagon
with a FOR from Section~\ref{sec:octagon} attains it over
\ensuremath{\C^3}.
Proposition~\ref{prop:real-k2} excludes size two over \ensuremath{\R},
while the construction of Section~\ref{sec:mayet} attains size three in
\ensuremath{\R^3}.
\end{proof}

\section{What the definition excludes}
\label{sec:excludes}

The stronger definition changes the classification of several familiar
operator identities.  If two unit-norm frames decompose the same positive
operator, Eq.~\eqref{eq:coordinate-identity} holds and all Born states give
the same sum.  Tight frames provide many such examples: a frame
$(a_i)_{i=1}^k$ with
\[
 \sum_i P_{a_i}=\frac{k}{n}I
\]
and a generic rotation of it yields distinct decompositions of the same
operator.  Likewise, in $\C^2$ relative phases give distinct two-ray
decompositions of one nonscalar positive operator.

These identities are \emph{representation-dependent}.  They
become pseudocontexts only if a finite auxiliary context hypergraph supplies
a certificate of the form~\eqref{eq:certificate}.  The bare rays of a generic
rotated tight-frame pair may have no contexts at all, in which case its
incidence matrix cannot force a nonzero target difference.  An operator
identity is thus a necessary consequence of a pseudocontext realized by a
FOR, by Corollary~\ref{cor:coordinates}, but it is not a sufficient
definition.

This distinction also clarifies the role of the algebraic formalization of
rank-one sums.  Coordinate calculations, including machine-checked
calculations, can verify Eq.~\eqref{eq:coordinate-identity}, trace identities,
ray distinctness, and nonorthogonality for a proposed realization.  The
additional pseudocontext claim is certified separately and exactly by the
finite incidence equation~\eqref{eq:certificate}.

Ordinary contexts are normalized independently of both coordinatization and
probability model.  Pseudocontexts inherit that same twofold independence
through a balanced finite context certificate.  A bare projector-sum identity,
by contrast, concerns Born-rule states for the chosen rank-one decomposition
and is generally coordinatization-dependent.

\section{Operational and contextuality remarks}
\label{sec:operational}

Members of one pseudocontext need not be jointly measurable.  The expression
$\sum_{a\in A}p(a)$ is a linear statistic obtained from separate, mutually
incompatible measurement arrangements and repeated preparations, not the
probability of one coarse-grained event.  It can exceed one.  The certificate asserts that
the same statistic is obtained from $B$ for every admissible preparation.

The equality is not by itself a contextuality witness.  Indeed, it holds in
every classical two-valued state whenever such states exist, as well as in
their convex mixtures.  Classical and quantum models may nevertheless allow
different ranges for the common sum, as in the examples of
Ref.~\cite{2023-navara-svozil}.  Any contextuality claim must be based on that
additional separation of state spaces, not on Eq.~\eqref{eq:universal-equality}
alone.  This is also distinct from generalized operational contextuality in
the sense of Spekkens~\cite{Spekkens-04}, where equivalences among arbitrary
preparations, transformations, or measurement procedures enter the
noncontextuality assumption.

The present equality is best regarded as a theory-independent conservation
law generated by a finite network of complete tests.  A probability model
may refine the network, restrict its state space, or supply a geometric
realization, but it cannot change a balanced incidence consequence while
retaining the same vertex identifications and context equations.

\section{Discussion and open problems}

Definition~\ref{def:pseudocontext} makes pseudocontexts finite, exact, and portable across
probability theories.  It also suggests several concrete problems.

First, one may classify the sparse signed vectors in
$\mathcal C_\Gamma$ and minimize their context certificates.  This is an
integer-linear problem at the combinatorial level.  Second, among the
resulting abstract pseudocontexts, one may ask which admit FORs in $\R^n$,
in $\C^n$, or in both.  The octagon shows that
these answers can differ even though the certificate is field-free.  Third,
one may compare the ranges of the common statistic over classical,
quantum, and more general admissible state spaces.  Only this third layer can
produce a statistical separation once the common equality has been fixed.

A further structural question is whether every rational certificate can be
replaced, perhaps after a controlled finite extension of the hypergraph, by a
$\{-1,0,1\}$ complementary covering certificate.  Such a result would turn
every abstract linear proof into a directly visible pair of coverings.
Minimal extensions that convert a representation-dependent projector
identity into a pseudocontext are another natural target.

\section{Conclusion}

A pseudocontext should not depend on lucky coordinates.  Under the
restrictive definition adopted here, the equality of its two probability
sums must be forced by finitely many completeness and additivity equations.
The balanced incidence certificate
$\ensuremath{M_\Gamma}^{\mathsf T}\lambda=\chi_A-\chi_B$,
$\ensuremath{\one_\E}^{\mathsf T}\lambda=0$ is necessary and sufficient for precisely this
notion of forcing.  It proves the equality simultaneously for classical,
quantum, softmax-generated, and other admissible weights, and it transfers to
an operator identity in every Hilbert-space FOR.

The $15$-vertex and $20$-vertex examples show that this stronger concept is
nonempty and retains a sharp real--complex minimum-size distinction.  The
separate questions ``Is the equality forced?'', ``Does the hypergraph admit
a FOR?'', and ``Which probability ranges does a theory
allow?'' should remain separate in subsequent applications.

\section*{Data availability} No datasets were generated or analyzed. All context lists and combinatorial certificates used in the proofs are included in this article. Exact coordinates for the octagon, together with verification that they realize precisely the stated orthogonality relations, are provided in Ref.~\cite{2026-Navara-Svozil-complex-only-3d-hypergraphs}.

\begin{acknowledgments}
During manuscript preparation, the authors used OpenAI Codex (GPT-5) to assist with literature organization, LaTeX editing, exact symbolic checks, and figure/PDF verification; all AI-assisted output was directed and critically reviewed by the authors, who independently verified the mathematics and take full responsibility for the final content, and no AI tool is an author.
This research was funded in part by the Austrian
Science Fund (FWF), Grant DOI \href{https://doi.org/10.55776/PIN5424624}{10.55776/PIN5424624}, and the Czech Science
Foundation (GA\v{C}R), Grant No.~25-20013L.  The authors declare no
conflict of interest.
\end{acknowledgments}

\bibliography{svozil}

\end{document}